%
\documentclass[conference]{IEEEtran}
\IEEEoverridecommandlockouts
\usepackage{geometry}
\geometry{
	letterpaper,
	left=0.625in,
	right=0.625in,
	top=0.75in,
	bottom=1.03in,
}

\usepackage{amssymb}
\usepackage{amsmath}
\usepackage{caption}
\usepackage{cite}
\usepackage{mathrsfs}
\usepackage[displaymath,mathlines]{lineno}
\usepackage{color}
\usepackage{overpic}
\usepackage{algorithm,algpseudocode}
\usepackage{algorithmicx}
\usepackage{pbox}
\usepackage{multicol}
\usepackage{amsthm}
\usepackage{epstopdf}

\makeatletter
\newcommand{\doublewidetilde}[1]{{%
		\mathpalette\double@widetilde{#1}%
}}
\newcommand{\double@widetilde}[2]{%
	\sbox\z@{$\m@th#1\widetilde{#2}$}%
	\ht\z@=.5\ht\z@
	\widetilde{\box\z@}%
}
\makeatother

\newtheorem{theorem}{Theorem}
\newtheorem{lemma}{Lemma}

\newtheorem{remark}{Remark}

\setcounter{page}{1}

\definecolor{red}{rgb}{0,0,0}
\begin{document}
\title{\huge User Scheduling for Precoded Satellite Systems with Individual Quality of Service Constraints}
\author{\normalsize Trinh~Van~Chien$^\ast$, Eva~Lagunas$^\ast$, Tung~Hai~Ta$^\dagger$, Symeon~Chatzinotas$^\ast$, and Bj\"{o}rn~Ottersten$^\ast$\\
	\IEEEauthorblockA{$^\ast$Interdisciplinary Centre for Security, Reliability and Trust (SnT), University of Luxembourg, Luxembourg \\
	$^\dagger$School of Information and Communication Technology (SoICT), Hanoi University of Science and Technology, Vietnam\\
	Email:\{vanchien.trinh, eva.lagunas, symeon.chatzinotas, and bjorn.ottersten\}@uni.lu,  	tung.tahai@hust.edu.vn \vspace{-0.3cm}}
	\thanks{This work has been partially supported by the Luxembourg National Research Fund (FNR) under the project FlexSAT “Resource Optimization for Next Generation of Flexible SATellite Payloads” (C19/IS/13696663) and the European Space Agency (ESA) funded activity “CGD - Prototype of a Centralized Broadband Gateway for Precoded Multi-beam Networks”. The views of the authors of this paper do not necessarily reflect the views of ESA.}
}

\maketitle

\begin{abstract}
Multibeam high throughput satellite (MB-HTS) systems will play a key role in delivering broadband services to a large number of users with diverse Quality of Service (QoS) requirements. This paper focuses on MB-HTS where the same spectrum is re-used by all user links and, in particular, we propose a novel user scheduling design capable to provide guarantees in terms of individual QoS requirements while maximizing the system throughput. This is achieved by precoding to mitigate mutual interference. The combinatorial optimization structure requires an extremely high cost to obtain the global optimum even with a reduced number of users.  We, therefore, propose a heuristic algorithm yielding a good local solution and tolerable computational complexity, applicable for large-scale networks. Numerical results demonstrate the effectiveness of our proposed  algorithm on scheduling many users with better sum throughput than the other benchmarks. Besides, the QoS requirements for all scheduled users are guaranteed. 
\end{abstract}
\begin{IEEEkeywords}
Multi-Beam High Throughput Satellite, User Scheduling, Quality of Service, Sum Throughput Optimization.
\end{IEEEkeywords}
\vspace*{-0.2cm}
\IEEEpeerreviewmaketitle
\vspace*{-0.3cm}
\section{Introduction}
\vspace*{-0.2cm}
Multi-beam high throughput satellite  (MB-HTS) systems are known to provide high-speed broadband services to users or areas that cannot be reached or are not sufficiently covered with conventional terrestrial networks \cite{9210567,8746876}. Unlike mono-beam satellites, the received signal strength can be increased thanks to an array fed reflector that results in high beamforming gains and spatially multiplexed communications, following by significant improvements in the instantaneous throughput \cite{7765141}. The multi-spot beams enable an MB-HTS system to offer more service flexibility to satisfy heterogeneous demands from multiple users sharing the same time and frequency resource.  

The performance of MB-HTS systems with aggressive frequency reuse heavily depends on both the precoding design and the user scheduling mechanism, which should be jointly optimized to obtain the globally optimal performance due to the coupled nature as pointed in \cite{vazquez2016precoding}. Unfortunately, the joint optimization is extraordinarily challenging for real systems since the precoding coefficients are chosen based on the channel state information (CSI) of the scheduled users; and the scheduled users' performance is dependent on the precoding design. De facto, a system performance close to the optimal can be attained when users with semi-orthogonal channel vectors are selected \cite{Yoo2006a,8401547}. By fixing the precoding technique, most of the previous works have focused on the user scheduling designs for a single time slot by estimating the orthogonality between the channel vectors using, for example, the cosine similarity metric \cite{7091022} or the semi-orthogonality projection \cite{Yoo2006a}. However, the user scheduling over multiple time slots, i.e., block scheduling design, will be different and more challenging to maintain the QoSs of scheduled users. To the best of authors' knowledge, it is the first time that MB-HTS block scheduling with individual QoS constraints has been investigated.

This paper explores the benefits of block-based user scheduling in enhancing the system throughput, whilst maintaining the QoS requirements in MB-HTS systems with full frequency reuse. Our main contributions are listed as follows: $i)$ We formulate a novel user scheduling problem spanning different time slots that maximizes the sum throughput for an observed window time and the user-specific QoS constraints. Determining the optimal solution to this combinatorial problem requires an exhaustive search of the parameter space. This is not to be preferable due to the exponential increase of the potential scheduling solutions when many users are available in the coverage area. $ii)$ We, therefore, propose a heuristic algorithm yielding a local solution in polynomial time. We also theoretically provide the convergence analysis and the computational complexity order. $iii)$ The proposed scheduling algorithm is evaluated via numerical simulations and it outperforms the other benchmarks in both the sum and per-user throughput. The users' QoS requirements formulated with specific per-user data demands are shown to be satisfied.  

\textit{Notation}: The upper and lower bold letters denote the matrices and vectors, respectively. The superscripts $(\cdot)^H$ and $(\cdot)^T$ are the Hermitian and regular transposes. The Euclidean norm is $\| \cdot \|$, $\mathrm{tr}(\cdot)$ is the trace of a matrix, and $\mathcal{CN}(\cdot, \cdot)$ is the circularly symmetric Gaussian distribution. The expectation of a random variable is $\mathbb{E}\{\cdot\}$. The union of sets is  $\cup$ and $\subseteq$ denotes the subset operator. Finally, the cardinality of set $\mathcal{A}$ is denoted as $|\mathcal{A}|$ and $\mathcal{O}(\cdot)$ is the big-$\mathcal{O}$ notation. 

\vspace*{-0.3cm}
\section{System Model \& Performance Analysis}
\vspace*{-0.2cm}
This section introduces a unicast multi-beam satellite system model in which a single user per beam is scheduled at each time instance. The aggregated and instantaneous downlink throughput for every scheduled user is then presented under the considered scheduling framework.
\vspace*{-0.2cm}
\subsection{System Model}
\vspace*{-0.2cm}
We  consider  the  downlink  of  a geostationary (GEO)  broadband  MB-HTS system that aggressively reuses the user link frequency. Precoding is assumed to be implemented in order to mitigate the co-channel interference. The satellite is assumed to generate $M$ partially overlapping beam clusters as illustrated in Fig.~\ref{FigSysModel}. For simplicity, the number of overlapping beam clusters is equal to the number of antennas at the satellite. There are $N$ single-antenna users available with $N \gg M$ in an observed window time that comprises $T$ time slots. We assume that the system operates in a unicast mode in which in which at most $M$ users can be scheduled per time-slot (black users in Fig.~\ref{FigSysModel}). By acquiring the fixed-satellite service \cite{guidolin2015study}, user locations are geographically fixed, but the transmit data signals are independently distributed and mutually exclusive. Let us denote $\mathcal{K}(t)$ the scheduled-user set at the $t-$th time slot, which satisfies
$\mathcal{K}(t) \subseteq \{1, \ldots, N \}$  and $|\mathcal{K}(t)| \leq M$.
We assume that in the observed window time, the propagation channels are static, which is in general valid for GEO satellite communications and reasonable window lengths. Specifically, if the channel between user~$k$ and the satellite is $\mathbf{h}_k \in \mathbb{C}^M$, then we can denote the channel matrix at the $t-$th time slot as
$\mathbf{H}(t) = \big[\mathbf{h}_{\pi_1}, \ldots,\mathbf{h}_{\pi_{|\mathcal{K}(t)|}}  \big] \in \mathbb{C}^{M \times |\mathcal{K}(t)|}$ with $\pi_1, \ldots,\pi_{|\mathcal{K}(t) |} $  being the user indices in $\mathcal{K}(t)$. Subsequently, the size of channel matrix depends on the cardinality $|\mathcal{K}(t)|$. From practical aspects, $\mathbf{H}(t)$ is formulated as
\begin{equation}
\mathbf{H}(t) =   \mathbf{B}(t) \pmb{\Phi}(t),
\end{equation}
where $\mathbf{B}(t) \in \mathbb{R}_{+}^{M \times |\mathcal{K}(t)|}$ represents the different influences in satellite communications comprising the received antenna gain, thermal noise, path loss, and satellite antenna radiation pattern with the $(m,k)-$th element defined as
\begin{equation}
b_{mk} = \frac{\lambda \sqrt{\widehat{G}_{Rk} G_{mk}}}{4 \pi d_{mk}}, m=1, \ldots M, k = 1, \ldots, |\mathcal{K}(t)|,
\end{equation}
where $\lambda$ is the wavelength of a plane wave; $d_{mk}$ is the distance between the $m-$th satellite antenna and user~$k$. It is safe to assume $d_{1k} = \ldots = d_{Mk}, \forall k,$ for a GEO satellite system because of long propagation distance.  The receiver antenna gain is denoted as $\widehat{G}_{Rk}$, which mainly depends on the receiving antenna aperture, whilst $G_{mk}$ is the gain defined by the satellite radiation pattern and user location. The diagonal matrix $\pmb{\Phi}(t) \in \mathbb{C}^{|\mathcal{K}(t)| \times |\mathcal{K}(t)|}$ expresses the signal phase rotations with the  $(k,k)-$th diagonal element $\phi_{kk} = e^{i \psi_k}, \forall k = 1, \ldots, |\mathcal{K}(t)|,$ and $\psi_k$ identically and independently distributed by the uniform distribution.
\begin{figure}[t]
	\centering
	\includegraphics[trim=3.9cm 0.6cm 3.5cm 3.2cm, clip=true, width=2.8in]{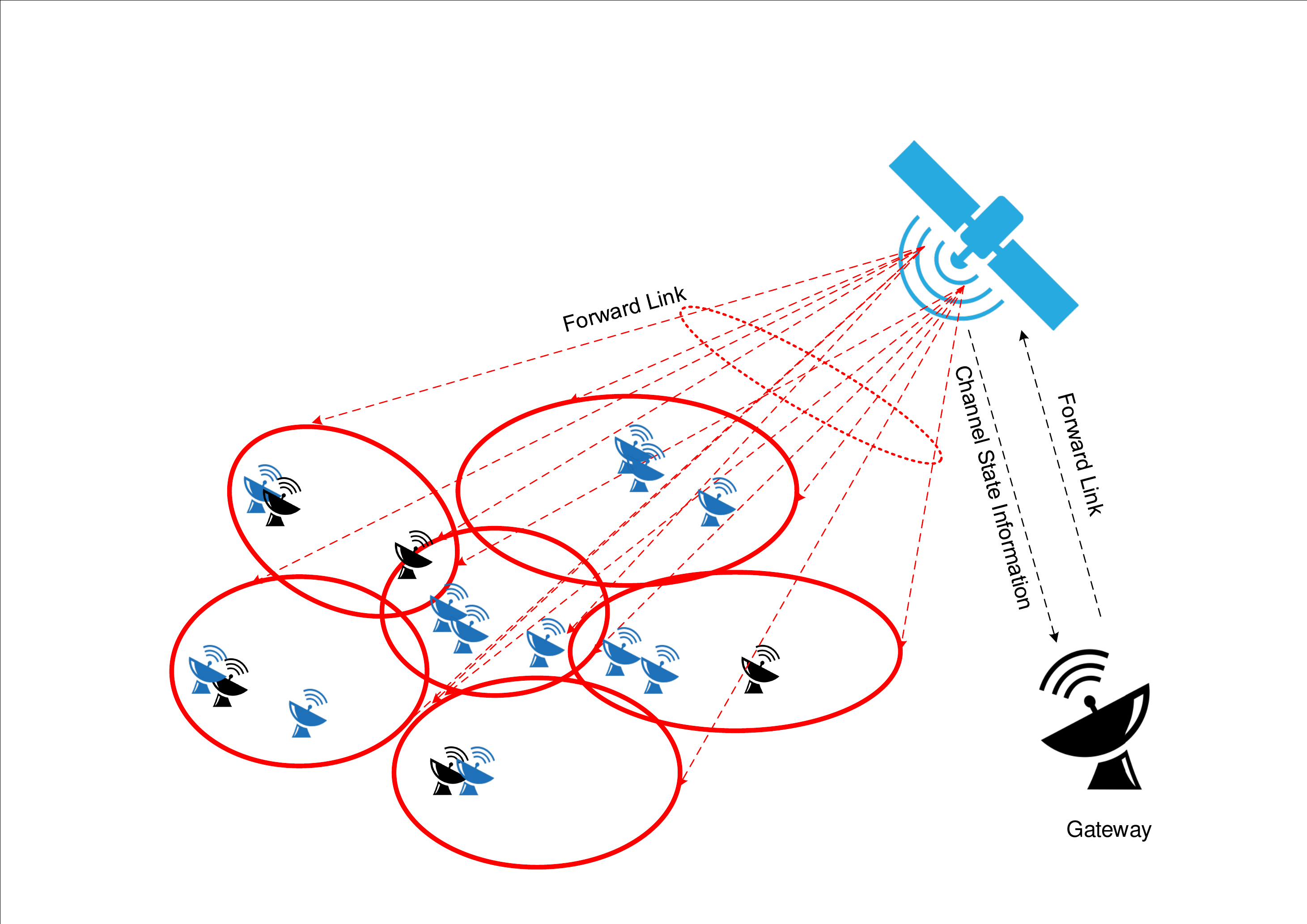} \vspace*{-0.15cm}
	\caption{The MB-HTS system model with one GEO satellite serving many users in an observed window time.}
	\label{FigSysModel}
	\vspace*{-0.5cm}
\end{figure}
\vspace{-0.2cm}
\subsection{Downlink Data Transmission}
\vspace{-0.2cm}
At the $t-$th time slot, the satellite is simultaneously transmitting data signals to the scheduled users. In detail, $s_k(t)$ is the modulated data symbol for scheduled user~$k$ with $ \mathbb{E}\{|s_k(t)|^2\} = 1$. 
The received signal at scheduled user $k$ with $k \in \mathcal{K}(t)$, denoted by $y_k(t) \in \mathbb{C}$, is thus formulated as 
\begin{equation} \label{eq:ReceivedSig}
	y_k(t) = \sum\nolimits_{k' \in \mathcal{K}(t)} \sqrt{p_{k'}} \mathbf{h}_{k}^H \mathbf{w}_{k'}(t) s_{k'}(t) + n_k(t),
\end{equation}
where $\mathbf{w}_k(t)$ is the precoding vector used for scheduled user~$k$ with $\| \mathbf{w} (t) \| = 1$ and $p_k$ is data power allocated to this user; $n_k(t)$ is additive noise with $n_k(t) \sim \mathcal{CN}(0,\sigma^2)$ and $\sigma^2$ being the noise variance. Although the channels are static in the observed window time, the precoding vectors vary upon time slots due to the user scheduling. Conditioned on the precoding vectors, the limited power budget at the satellite can be expressed as
\begin{equation}
	\sum_{k' \in \mathcal{K}(t)} p_{k'} \mathbb{E} \{ \| \mathbf{w}_{k'} (t) s_{k'}(t) \|^2 \} = 	\sum_{k' \in \mathcal{K}(t)} p_{k'} \leq P_{\max},
\end{equation}
where $P_{\max}$ is the maximum transmit power that the satellite can spend  for data symbols at the $t-$th time slot. In order to compute the instantaneous throughput of scheduled user~$k$, we recast the received signal \eqref{eq:ReceivedSig} into an equivalent form as
\begin{equation} \label{eq:ReceivedSigv1}
	\begin{split}
	 y_k(t) =& \sqrt{p_{k}} \mathbf{h}_{k}^H \mathbf{w}_{k}(t) s_{k}(t) +  \sum_{k' \in \mathcal{K}(t) \setminus \{ k\} } \sqrt{p_{k'}} \mathbf{h}_{k}^H \\
	 &\times \mathbf{w}_{k'}(t)  s_{k'}(t) + n_k(t),
	\end{split}
\end{equation}
where the first part contains the desired signal, while the second part is mutual interference from the other scheduled users at the $t-$th time slot. From \eqref{eq:ReceivedSigv1}, the aggregated and per-time-slot throughput of scheduled user~$k$ is given in Lemma~\ref{lemma:ChannelCapacity}.
\begin{lemma} \label{lemma:ChannelCapacity}
Assuming that user~$k$ is scheduled only in the $T_k$ time slots, $1 \leq T_k \leq T$, its aggregated throughput is
\begin{equation} \label{eq:Rk}
	R_k\left( \{ \mathcal{K}(t) \} \right) = \sum_{t=1}^{T_k} R_k(\mathcal{K}(t) ),\mbox{[Mbps]},
\end{equation}
where $R_k(\mathcal{K}(t) )$ is the instantaneous throughput at the $t-$th time slot, $1\leq t \leq T_k$, which is computed as 
\begin{equation}\label{eq:Capacityk}
R_k(\mathcal{K}(t) ) = B \log_2 \left( 1 + \mathrm{SINR}_k(\mathcal{K}(t) ) \right), \mbox{[Mbps]},
\end{equation}
where $B$~[MHz] is the system bandwidth and the signal-to-interference-and-noise ratio is
\begin{equation}
\mathrm{SINR}_k(\mathcal{K}(t) )  = \frac{ p_{k} | \mathbf{h}_{k}^H \mathbf{w}_{k}(t) |^2 }{ \sum_{k' \in \mathcal{K}(t) \setminus \{ k\} } p_{k'} |\mathbf{h}_{k}^H \mathbf{w}_{k'}(t)|^2 +\sigma^2}.
\end{equation}
\end{lemma}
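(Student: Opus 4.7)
The plan is to derive both expressions by working directly from the equivalent received-signal decomposition in \eqref{eq:ReceivedSigv1}. The per-time-slot statement is a Shannon-capacity argument for a scalar Gaussian channel with treated-as-noise multi-user interference; the aggregated statement then follows by time-slot additivity of throughput over the $T_k$ slots where user~$k$ is actually served.

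First I would fix an arbitrary time slot $t$ with $k \in \mathcal{K}(t)$ and identify the three ingredients from \eqref{eq:ReceivedSigv1}: the desired signal $\sqrt{p_k}\,\mathbf{h}_k^H \mathbf{w}_k(t) s_k(t)$, the multi-user interference $\sum_{k'\in\mathcal{K}(t)\setminus\{k\}}\sqrt{p_{k'}}\,\mathbf{h}_k^H \mathbf{w}_{k'}(t) s_{k'}(t)$, and the thermal noise $n_k(t)\sim\mathcal{CN}(0,\sigma^2)$. Using $\mathbb{E}\{|s_{k'}(t)|^2\}=1$, the fact that the data symbols are independent across $k'$, and that the noise is independent of the symbols, I would compute the desired-signal power as $p_k |\mathbf{h}_k^H \mathbf{w}_k(t)|^2$ and the aggregate interference-plus-noise power as $\sum_{k'\in\mathcal{K}(t)\setminus\{k\}} p_{k'} |\mathbf{h}_k^H \mathbf{w}_{k'}(t)|^2 + \sigma^2$. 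Taking the ratio yields exactly $\mathrm{SINR}_k(\mathcal{K}(t))$.

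Next I would invoke the standard worst-case-noise result: given that the effective channel coefficient $\mathbf{h}_k^H \mathbf{w}_k(t)$ is deterministic and known at the receiver, and using Gaussian codebooks $s_k(t)\sim\mathcal{CN}(0,1)$, the interference-plus-noise is zero-mean with the variance computed above, and treating it as Gaussian is the worst case for mutual information with a Gaussian input. Applying the Shannon formula to this scalar complex-Gaussian channel and scaling by the system bandwidth $B$ directly delivers \eqref{eq:Capacityk}. Because the propagation channels are static over the window and the precoders $\mathbf{w}_{k'}(t)$ are chosen per slot, the instantaneous throughput genuinely depends on $\mathcal{K}(t)$ only through the active precoders and powers in that slot, which justifies writing $R_k(\mathcal{K}(t))$.

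Finally, for \eqref{eq:Rk}, I would argue that coding can be performed independently across the $T_k$ slots in which user~$k$ is scheduled (slots where $k\notin\mathcal{K}(t)$ contribute zero rate by definition), so the achievable aggregate throughput is the sum of per-slot throughputs; relabelling these slots $t=1,\ldots,T_k$ gives the stated expression. The only non-mechanical step is the worst-case Gaussian interference justification for \eqref{eq:Capacityk}; beyond that, everything reduces to second-moment bookkeeping and additivity of rates across independent time slots, so I do not anticipate a serious technical obstacle.
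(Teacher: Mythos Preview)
Your proposal is correct and follows the same line as the paper, which simply invokes the Shannon capacity formula under perfect CSI with the interference treated as noise and then accumulates the per-slot rates over the $T_k$ scheduled slots. Your write-up is a more detailed (and arguably more careful, via the worst-case Gaussian argument) version of exactly that derivation.
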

\begin{proof}
The instantaneous throughput of scheduled user~$k$ at each time slot is computed as \eqref{eq:Capacityk} by exploiting the Shannon channel capacity under perfect channel state information and known mutual interference. The aggregated throughput is further accumulated over all the $T_k$ time slots as in  \eqref{eq:Rk}.
\end{proof}
For a given transmit power coefficients, the instantaneous throughput in \eqref{eq:Capacityk} is a function of $\mathcal{K}(t)$, while the aggregated throughput depends on all the scheduled users in the $T_k$ time slots. It is noteworthy that the throughput in Lemma~\ref{lemma:ChannelCapacity} can be applied for arbitrary channel models and precoding techniques. This paper exploits linear precoding processing because it has a lower cost than the optimal. More specifically, we deploy the regularized zero forcing (RZF) precoding matrix $\mathbf{W}(t) \in \mathbb{C}^{M \times |\mathcal{K}(t)|}$, which is
\begin{equation} \label{eq:MMSEPrecodMatrix}
\mathbf{W}(t) = \frac{1}{\sqrt{\gamma(t)}} \mathbf{H}(t) \Big(\mathbf{H}(t)^H \mathbf{H}(t) + \frac{\sigma^2}{P_{\max}} \mathbf{I}_{ |\mathcal{K}(t)|} \Big)^{-1},
\end{equation}
where $\mathbf{I}_{ |\mathcal{K}(t)|}$ is the identity matrix of size $|\mathcal{K}(t)| \times |\mathcal{K}(t)|$ and the normalized power constant $\gamma(t)$ is defined as
\begin{equation}
\gamma(t) = \mathrm{tr} \Big( \mathbf{H}(t) \Big(\mathbf{H}^H(t) \mathbf{H}(t) + \frac{\sigma^2}{P_{\max}} \mathbf{I}_{|\mathcal{K}(t)|} \Big)^{-2} \mathbf{H}^H(t) \Big).
\end{equation}
We should notice that each precoding matrix in \eqref{eq:MMSEPrecodMatrix} is a function of the scheduled users at the $t-$th time slot, thus it verifies the high importance of a proper set $\mathcal{K}(t)$ in boosting the throughput. By counting for the arithmetic operations with the high cost such as complex multiplications and divisions \cite{van2019power}, the computational complexity order to construct a RZF precoding matrix is presented in Lemma~\ref{lemma:ComplexityWtv2}. 
\begin{lemma} \label{lemma:ComplexityWtv2}
 The precoding matrix $\mathbf{W}(t)$ is constructed by the computational complexity in the order of $\mathcal{O}\left( \frac{1}{2}M^2 |\mathcal{K}(t)| \right)$ as a consequence of the channel matrix $\mathbf{H}(t)$ depending on the scheduled-user set $\mathcal{K}(t)$.
\end{lemma}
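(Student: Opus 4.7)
The plan is to dissect the construction of $\mathbf{W}(t)$ in \eqref{eq:MMSEPrecodMatrix} into a short pipeline of elementary matrix operations, charge each with its dominant arithmetic cost (complex multiplications and divisions, following the convention of \cite{van2019power}), and retain only the leading-order term in $M$ and $K := |\mathcal{K}(t)|$. The unicast structure of the problem gives the constraint $K \leq M$, which is what will let us collapse lower-order contributions into the final big-$\mathcal{O}$ bound.

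I would first split \eqref{eq:MMSEPrecodMatrix} into four subtasks: (i) assembling the regularized Gram matrix $\mathbf{G}(t) = \mathbf{H}(t)^H\mathbf{H}(t) + (\sigma^2/P_{\max})\mathbf{I}_K$; (ii) factorizing/inverting the Hermitian matrix $\mathbf{G}(t)$; (iii) multiplying $\mathbf{H}(t)$ by $\mathbf{G}(t)^{-1}$ to form the unnormalized precoder; and (iv) evaluating $\gamma(t)$ and applying the scalar rescaling $1/\sqrt{\gamma(t)}$. The Hermitian symmetry of the Gram matrix halves the effective work in step (i), which is precisely where the $\tfrac{1}{2}$ factor in the claimed bound originates. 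Step (ii) is handled by a Cholesky-type $K\times K$ inversion, step (iii) is a standard matrix--matrix product, and step (iv) can be computed cheaply by recognizing $\gamma(t) = \|\mathbf{H}(t)\mathbf{G}(t)^{-1}\|_F^2$ and reusing the output of step (iii).

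Summing the per-step multiplication counts and invoking $K \leq M$, I would argue that the overall order is dominated by the term of magnitude $\tfrac{1}{2}M^2 K$, while the other contributions either fall to strictly lower order or are absorbed into the leading constant. The explicit dependence on $|\mathcal{K}(t)|$ in the final expression simply reflects the fact that $\mathbf{H}(t)$ has exactly $|\mathcal{K}(t)|$ columns by construction in the system model, so every step inherits this dimension from the channel matrix.

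The principal obstacle is bookkeeping rather than insight: I must retain constants explicitly instead of hiding them inside $\mathcal{O}(\cdot)$ in order to justify the $\tfrac{1}{2}$ factor, and I need to be careful to recycle intermediate products in step (iv) rather than redundantly recomputing a second $K\times K$ inverse, which would only change the constant but risks obscuring whether the claimed leading term is actually tight.
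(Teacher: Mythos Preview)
Your decomposition into the four subtasks and the operation-counting strategy mirror the paper's proof almost exactly, including the appeal to Hermitian symmetry, the Cholesky factorization, and the final use of $|\mathcal{K}(t)|\leq M$ to discard lower-order terms.

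One discrepancy is worth flagging. In the paper the dominant $\tfrac{1}{2}M^{2}|\mathcal{K}(t)|$ contribution does \emph{not} come from the $|\mathcal{K}(t)|\times|\mathcal{K}(t)|$ Gram matrix in your step~(i); that step only costs $\tfrac{1}{2}M|\mathcal{K}(t)|^{2}$. It comes instead from the paper's evaluation of $\gamma(t)$, where they form the $M\times M$ Hermitian product $(\mathbf{H}(t)\mathbf{G}(t)^{-1})(\mathbf{H}(t)\mathbf{G}(t)^{-1})^{H}$ before taking the trace, incurring $\tfrac{1}{2}(M^{2}+M)|\mathcal{K}(t)|$ multiplications. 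Your Frobenius-norm shortcut $\gamma(t)=\|\mathbf{H}(t)\mathbf{G}(t)^{-1}\|_{F}^{2}$ bypasses that outer product entirely and reduces step~(iv) to $M|\mathcal{K}(t)|$ multiplications; following your plan literally, the leading term of your count is then $\tfrac{3}{2}M|\mathcal{K}(t)|^{2}$ from steps~(i) and~(iii), which is actually sharper than the paper's bound. Since $|\mathcal{K}(t)|\leq M$ this still lies in $\mathcal{O}(M^{2}|\mathcal{K}(t)|)=\mathcal{O}(\tfrac{1}{2}M^{2}|\mathcal{K}(t)|)$, so the lemma is established either way, but your sentence attributing the $\tfrac{1}{2}$ constant to step~(i) matches neither your own arithmetic nor the paper's.
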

\begin{proof}
By applying \cite[Lemma~B.1]{Bjornson2017bo} to the channel matrix $\mathbf{H}(t)$, the product $\mathbf{H}^H(t)\mathbf{H}(t)$ requires $\frac{1}{2}|\mathcal{K}(t)|(|\mathcal{K}(t)|+1)M$ complex multiplications thanks to the Hermitian symmetry. Let us introduce a new matrix $\mathbf{G}(t) = \mathbf{H}(t)^H \mathbf{H}(t) + \frac{1}{P_{\max}} \mathbf{I}_{|\mathcal{K}(t)|}$, then attaining $\mathbf{G}(t)$ needs $\left( \frac{1}{2}M(|\mathcal{K}(t)|+1) + 1\right)|\mathcal{K}(t)|$ complex multiplications. According to \cite[Lemma~B.2]{Bjornson2017bo}, the inverse matrix $\mathbf{H}(t)\mathbf{G}^{-1}(t)$ can be computed efficiently by utilizing the Cholesky decomposition that includes $\frac{|\mathcal{K}(t)|^3 - |\mathcal{K}(t)|}{3} + |\mathcal{K}(t)|^2 M + |\mathcal{K}(t)|$ complex multiplications and divisions. Furthermore, we need the $\frac{1}{2}(M^2 +M)|\mathcal{K}(t)| + 2$ complex multiplications, division, and square root to obtain $\gamma(t)$. Thus, the number of the arithmetic operations to obtain the RZF precoding matrix $\mathbf{W}(t)$ is obtained by adding all the cost. 
Due to the fact $|\mathcal{K}(t)| \leq M$, we can ignore the terms with low degree in the obtained posynomial and hence the computational complexity order is shown as in the lemma.
\end{proof}
The key achievement from Lemma~\ref{lemma:ComplexityWtv2} is to point out the computational complexity of the RZF precoding matrix construction directly proportional to the total elements in the scheduled-user set $\mathcal{K}(t)$ for a given number of satellite beams. We later utilize Lemma~\ref{lemma:ComplexityWtv2} to evaluate the complexity order of the proposed algorithm to the user scheduling problem. 

From \eqref{eq:MMSEPrecodMatrix}, the precoding vector dedicated to scheduled user~$k$  at each time slot, i.e., $\mathbf{w}_k(t)$, is selected as the $k-$th column of matrix $\mathbf{W}_k(t)$. By exploiting a similar methodology as what has done for Lemma~\ref{lemma:ComplexityWtv2}, it is straightforward to manifest that RZF precoding has the higher computational complexity than other linear signal processing techniques such as maximum ratio or zero forcing. Nonetheless, this precoding selection provides better throughput than the others and avoiding an ill-posed inverse appearing when the channels are highly correlated leading to rank deficiency.
\vspace*{-0.2cm}
\section{Sum Throughput Optimization}
\vspace*{-0.2cm}
By considering the user scheduling in an observed window time, a sum throughput optimization problem with the QoS requirements is first formulated. Because of the inherent non-convexity, a heuristic algorithm is then proposed to obtain a local solution in polynomial time.
\vspace*{-0.2cm}
\subsection{Problem Formulation}
\vspace*{-0.1cm}
Our objective function in this paper is the total sum throughput of all the scheduled users in the considered window time and the individual QoS requirements of scheduled users are constraints. Mathematically, the optimization problem, which we would like to solve, is formulated as
\vspace*{-0.1cm}
\begin{subequations} \label{Problemv1}
	\begin{alignat}{2}
		& \underset{\{ \mathcal{K}(t) \} }{\mathrm{maximize}}
		& & \, \sum\nolimits_{t=1}^T \sum\nolimits_{k \in \mathcal{K}(t) } R_k(\mathcal{K}(t)) \\
		& \mbox{subject to}
		& &   \, R_k(\mathcal{K}(t)) \geq \frac{\xi_k}{T_k}, \forall k \in \mathcal{K}(t), \forall t, \label{eq:QoSconst}\\
		&&& \, \mathcal{K} (t) \in \{ 1, \ldots, N\}, \forall t, \label{eq:Kt1}\\
		&&& \, | \mathcal{K} (t)| \leq M, \forall t, \label{eq:Kt2} \\
		&&& \,  \bigcup \mathcal{K}(t) \subseteq \{1, \ldots,N \} \label{eq:Kt3},
	\end{alignat}
\end{subequations}
where $T_k$ is the number of time slot that spends on scheduled user~$k$ to fulfill the QoS requirement, denoted by $\xi_k$ as in \eqref{eq:QoSconst}. As $T$ is sufficiently large, the long-term QoS satisfaction of user~$k$ is defined as $R_k(\{ \mathcal{K}(t) \}) \geq \xi_k T_k$, which is spontaneously fulfilled when all the per-time-slot constraints in \eqref{eq:QoSconst} hold. Furthermore, \eqref{eq:Kt1}--\eqref{eq:Kt3} show the conditions on all the scheduled-user sets $\mathcal{K} (t), \forall t$. Specifically, \eqref{eq:Kt1} implies that every $\mathcal{K}(t)$ is a subset of the available-user set, say $\{1,\ldots,N\}$, whilst \eqref{eq:Kt2} implies that the number of scheduled users may be less than the available beams to maximize the sum throughput in the entire network and therefore demonstrating the flexibility of our optimization problem. The union of all the scheduled-user sets $\mathcal{K}(t), \forall t,$ over the observed window time is a subset of the available-user set in general. From the system viewpoint, some users may be ignored from service due to, for example, bad channel conditions and/nor too high QoS requirements such that they are not be served with a limited transmit power level.

We stress that problem~\eqref{Problemv1} is non-convex as a consequence of the discrete feasible domain and the non-convex objective function. Particularly, the discrete feasible domain makes \eqref{Problemv1} a combinatorial problem, where the global optimum can only be obtained for a small scale network  with few users and small number of beams since an exhaustive search of the parameter space is required. Nevertheless, the exhaustive search has the computational complexity scaling up exponentially with the number of available users. For instance, with $M=7, N=100$, and only one time slot is considered for the sake of simplicity, the optimal solution is obtained by searching over $\sum_{k=1}^M \frac{N!}{k!(N-k)!} \approx 1.7 \times 10^{10}$ different combinations, which is prohibitively large. An exhaustive search is, therefore, not preferable for large-scale networks with many users as the main consideration in this paper. For now, we  differentiate our user scheduling optimization problem from the related works as shown in Remark~\ref{Remark1}.
\vspace*{-0.2cm}
\begin{remark} \label{Remark1}
Problem~\eqref{Problemv1} is a generalized version of the previous works  \cite{Yoo2006a,Honnaiah2020} and references therein since the $N$ users are scheduled over different time slots and since we also take the QoS requirements into account. In other words, problem~\eqref{Problemv1} ensures the scheduled users always satisfied their throughput demands. Furthermore, an effective RZF precoding matrix constructed from a good scheduling scheme not only reduces mutual interference but also ameliorates the received signal strength that boosts the system performance. With a limited window time and the correlation among propagation channels, the number of scheduled users might be less than the total available users to maximize the network throughput.
\vspace*{-0.2cm}
\end{remark}
\vspace*{-0.1cm}
\subsection{Proposed Heuristic Algorithm}
\vspace*{-0.1cm}
Motivated by large-scale networks with many users simultaneously requesting to admit the system, we propose a heuristic algorithm that obtains a good local solution in polynomial time with tolerable computational complexity. Algorithm~\ref{Algorithm1} demonstrates the proposal with the double loops: The outer loop indicates the evolution of time slots and the inner loop is for the growth of the scheduled users per time slot.

 At the initial stage, let us denote $\mathcal{N}(0) \leftarrow \{1, \ldots, N\}$ the set of available users with the corresponding channels $\mathbf{h}_1, \ldots, \mathbf{h}_{N}$. Moreover, the scheduled user set $\mathcal{K}(0)$ is initially setup as an empty set. The proposed heuristic algorithm begins with sorting the channel gains in a descending order as
\begin{equation} \label{eq:ChannelOrder}
\| \mathbf{h}_{\pi_1}  \|^2 \geq \| \mathbf{h}_{\pi_2}  \|^2 \geq \ldots \geq \| \mathbf{h}_{\pi_N}    \|^2,
\end{equation}
where $\{\pi_1, \ldots,  \pi_N \}$ is a permutation of the user indices for which \eqref{eq:ChannelOrder} holds. Then, we set the outer iteration index $t=1$ and the available- and scheduled-user sets are updated as
\begin{equation} \label{eq:1stUpdate}
\mathcal{N}(1) \leftarrow \mathcal{N}(0) \setminus \{ \pi_1 \} \mbox{ and } \mathcal{K}(1) \leftarrow \mathcal{K}(0) \cup \{ \pi_1 \}.
\end{equation}
At the $t$-th outer iteration ($1\leq t \leq T$), if the number of scheduled users from the previous time slot, which have not been satisfied their QoS requirements yet, is less than the number of beams, i.e., $|\mathcal{K}(t-1)| < M$, there is room for scheduling new users to join the system in case of all the constraints of problem~\eqref{Problemv1} satisfied. For such, an inner loop is implemented to testify whether or not at most the $M- |\mathcal{K}(t-1)| +1$~potential users can be scheduled. The following optimization problem is therefore considered at the $m-$th inner iteration ($|\mathcal{K}(t-1)| \leq m \leq M$):
\begin{equation} \label{Prob:Ratev1}
k_m^{t,\ast} = \underset{k \in \mathcal{N}(t)}{\mathrm{argmax}}  \sum\nolimits_{k' \in \widetilde{\mathcal{K}}_m(t) } R_{k'} (\widetilde{\mathcal{K}}_m(t)),
\end{equation}
where each set $\widetilde{\mathcal{K}}_m(t)$ is related to one user $k \in \mathcal{N}(t)$, which is defined as
\begin{equation} \label{eq:Ktilde}
\widetilde{\mathcal{K}}_m(t) \leftarrow  \begin{cases}
	\mathcal{K}_{m-1}(t) \cup \{k\}, & \mbox{if } m =  |\mathcal{K}(t-1)| +1, \ldots, M,\\
 \mathcal{K}(t-1) \cup \{k\}, & \mbox{if }  m =  |\mathcal{K}(t-1)|.
\end{cases}
\end{equation}
In \eqref{eq:Ktilde},  $\mathcal{K}_{m-1}(t) $ is the scheduled-user set at the $(m-1)-$th inner iteration with $\mathcal{K}_m (t) = \mathcal{K}(t-1)$ when $m=|\mathcal{K}(t-1)|$. Problem~\eqref{Prob:Ratev1} aims at maximizing the total sum throughput at a particular time slot only.\footnote{The solution to problem~\eqref{Prob:Ratev1} is not unique in general. Alternatively, there may be more than one user with the same total sum throughput, but we can select one of them for further processing.} Hence, the solution to problem \eqref{Prob:Ratev1} does not guarantee a monotonic increasing property, which is in need to have a good local solution to the original problem~\eqref{Problemv1}. As foreseen from a multi-user system, user~$k_m^{t,\ast}$ causes more mutual interference to other users in the set $\widetilde{\mathcal{K}}_m(t)$ that may lead to their throughput no longer satisfy the QoS requirements. In order to get rid of this issue, we suggest a mechanism to further testify whether or not user~$k_m^{t,\ast}$ becomes a scheduled user as in Theorem~\ref{Theorem:SelectedUser}.
\vspace{-0.1cm}
\begin{theorem} \label{Theorem:SelectedUser}
 User~$k_m^{t,\ast}$ becomes a scheduled user if the following conditions satisfy
\begin{align}
  \sum_{k' \in \widetilde{\mathcal{K}}_m^\ast (t) } R_{k'} ( \widetilde{\mathcal{K}}_m^\ast (t) ) &\geq \sum_{k' \in \mathcal{K}_{m-1}(t) } R_{k'} (\mathcal{K}_{m-1}(t)), \label{eq:Ser1}\\
 R_{k'}(t) &\geq \xi_{k'}/T_{k'}, \forall k' \in \widetilde{\mathcal{K}}_m^{\ast}(t), \label{eq:Ser2}
\end{align}
where $\widetilde{\mathcal{K}}_m^{\ast}(t)$ is formulated as in \eqref{eq:Ktilde}, but for user~$k_m^{t,\ast}$. The condition~\eqref{eq:Ser1} guarantees the objective function of problem~\eqref{Problemv1} to be non-decreasing along iterations until reaching a fixed point, while all users admitted to the network satisfy their QoS requirements by the condition~\eqref{eq:Ser2}.
\vspace{-0.15cm}
\end{theorem}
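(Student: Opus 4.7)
The plan is to split Theorem~\ref{Theorem:SelectedUser} into its two assertions and address each separately: first the monotonicity of the aggregate objective in (11a) driven by condition~\eqref{eq:Ser1}, then the individual QoS satisfaction enforced by condition~\eqref{eq:Ser2}, and finally a short argument that the algorithm indeed arrives at a fixed point.

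For the first assertion, I would write the aggregate objective as $\sum_{\tau=1}^{T} f(\tau)$ with $f(\tau) = \sum_{k'\in\mathcal{K}(\tau)} R_{k'}(\mathcal{K}(\tau))$, and observe that during a single inner iteration at time slot~$t$ only the set $\mathcal{K}(t)$ is altered, so every $f(\tau)$ with $\tau\neq t$ is frozen. Hence the change in the outer objective coincides with the change in $f(t)$, i.e.\ with the difference $\sum_{k'\in\widetilde{\mathcal{K}}_m^{\ast}(t)} R_{k'}(\widetilde{\mathcal{K}}_m^{\ast}(t)) - \sum_{k'\in\mathcal{K}_{m-1}(t)} R_{k'}(\mathcal{K}_{m-1}(t))$, which condition~\eqref{eq:Ser1} requires to be nonnegative. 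Iterating this reasoning over the inner and outer loops yields monotonic non-decrease of (11a). Combined with the fact that the sum throughput is bounded above (finite $P_{\max}$, positive $\sigma^2$, and a finite number of users) and with the finiteness of the double loop (at most $T$ outer iterations and at most $M-|\mathcal{K}(t-1)|$ inner iterations per slot), the sequence of objective values is monotone and bounded, hence convergent, and the algorithm terminates at a fixed point in the sense that no further admission strictly improves (11a).

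For the second assertion, I would invoke condition~\eqref{eq:Ser2} directly: every time a new user $k_m^{t,\ast}$ is admitted, the rule demands $R_{k'}(t)\ge \xi_{k'}/T_{k'}$ for every $k'\in\widetilde{\mathcal{K}}_m^{\ast}(t)$. Summing this per-slot inequality over the $T_{k'}$ slots in which user~$k'$ is scheduled, together with the decomposition in \eqref{eq:Rk}, gives $R_{k'}(\{\mathcal{K}(t)\})\ge \xi_{k'}$, which is exactly the individual QoS constraint~\eqref{eq:QoSconst} of problem~\eqref{Problemv1}.

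The main obstacle, and where I would spend the most care, is the subtlety that admitting $k_m^{t,\ast}$ injects additional mutual interference through the RZF precoding matrix~\eqref{eq:MMSEPrecodMatrix}, so the rates of the previously scheduled users in $\mathcal{K}_{m-1}(t)$ are recomputed and may fall below their QoS thresholds. This is precisely why condition~\eqref{eq:Ser2} is tested over the entire updated set $\widetilde{\mathcal{K}}_m^{\ast}(t)$ rather than only on the new entrant, and it is the point at which I would explicitly note that $\mathbf{W}(t)$ must be reconstructed after each tentative admission before the admissibility tests \eqref{eq:Ser1}--\eqref{eq:Ser2} are evaluated. If either test fails, $k_m^{t,\ast}$ is rejected, $\mathcal{K}(t)$ reverts to $\mathcal{K}_{m-1}(t)$, and the monotonicity argument above continues to apply; if both tests pass, both claims of the theorem follow as described.
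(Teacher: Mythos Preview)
Your proposal is correct and follows essentially the same route as the paper: establish that condition~\eqref{eq:Ser1} forces the per-slot sum throughput to be non-decreasing over the inner iterations, then use non-negativity of the rates to carry monotonicity across outer iterations, invoke boundedness of the throughput to conclude convergence to a fixed point, and finally appeal to condition~\eqref{eq:Ser2} for the QoS guarantee. Your treatment is in fact slightly more explicit than the paper's in two places---the summation of the per-slot QoS inequalities over the $T_{k'}$ slots to recover the aggregate constraint, and the remark that $\mathbf{W}(t)$ must be rebuilt before testing \eqref{eq:Ser1}--\eqref{eq:Ser2}---but the underlying argument is the same.
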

\begin{proof}
We first prove that at the $t-th$ outer iteration, the objective function of problem~\eqref{Problemv1} is non-decreasing along inner iterations. Let us introduce $\alpha_m^t, m \in \{|\mathcal{K}(t-1)|, \ldots, M\},$ as
\begin{equation}
\alpha_m^t = \sum\nolimits_{k' \in \widetilde{\mathcal{K}}_m ^{\ast} (t) } R_{k'} (\widetilde{\mathcal{K}}_m ^{\ast} (t) ),
\end{equation}
then by exploiting \eqref{eq:Ser1} the following series of inequality holds
\begin{equation}
\alpha_M^t \geq \alpha_{M-1}^t \geq \ldots \geq \alpha_{|\mathcal{K}(t-1)|}^t,
\end{equation}
which demonstrates the non-decreasing property of the sum throughput in every time slot. Due to the non-negative property of the instantaneous channel capacity, we further obtain
\begin{equation}
\sum_{t'=1}^{t} \sum_{k \in \mathcal{K}(t') } R_k(\mathcal{K}(t') ) \geq \sum_{t'=1}^{t-1} \sum_{k \in \mathcal{K}(t') } R_k(\mathcal{K}(t')),
\end{equation}
which manifests the fact that the objective function of problem~\eqref{Problemv1} is non-decreasing along with iterations. For a given set of transmit power coefficients, the instantaneous throughput of scheduled user is finite. Hence, the objective function of problem~\eqref{Problemv1} is upper bounded and Algorithm~\ref{Algorithm1} converges to a fixed point solution. Additionally, \eqref{eq:Ser2} ensures the QoS requirements and therefore we conclude the proof.
\end{proof}
After adding user~$k_m^{t,\ast}$ to the system, we should update the available- and scheduled-user sets $\mathcal{N}$ and $\mathcal{K}(t)$ as 
\begin{equation} \label{eq:Updatev1}
	\mathcal{N}(t) \leftarrow \mathcal{N}(t) \setminus \{ k_m^{t,\ast} \} \mbox{ and } \mathcal{K}_m(t) \leftarrow \widetilde{\mathcal{K}}_{m}^{\ast} (t).
\end{equation}
The inner loop will continue until $m = M$ and the scheduled-user set $\mathcal{K}(t)$ is defined as
\begin{equation} \label{eq:Kt}
\mathcal{K}(t) \leftarrow \widetilde{\mathcal{K}}_M(t).
\end{equation}
At the end of each outer iteration, the algorithm should remove scheduled users from service if they are already satisfied their QoS requirements. This is done by computing the aggregated throughput in \eqref{eq:Rk}, and checking the QoS condition:
\begin{equation} \label{eq:Conditionv1}
R_k ( \{ \mathcal{K} (t)\}) \geq \xi_k.
\end{equation}
Let us denote $\widehat{\mathcal{K}}(t) \subseteq \mathcal{K}(t)$ the set of scheduled users already satisfied their QoS requirements, $\mathcal{K}(t)$ is further updated as
\begin{equation} \label{eq:Ktv1}
	\mathcal{K}(t) \leftarrow \mathcal{K}(t)\setminus \widetilde{\mathcal{K}}(t).
\end{equation}
The iterative approach will continue until all the time slots are considered and the proposed heuristic approach is summarized in Algorithm~\ref{Algorithm1}. Despite the local user scheduling solution, our proposed approach ensures the long-term sum throughput maximization over many different time slots with respect to their individual QoS requirements. The computational complexity of Algorithm~\ref{Algorithm1} is analytically presented hereafter.
\begin{algorithm}[t]
	\caption{A user scheduling algorithm for problem~\eqref{Problemv1}} \label{Algorithm1}
	\textbf{Input}: Available-user set $\mathcal{N}(0) \leftarrow \{1, \ldots, N \}$; Scheduled-user set $\mathcal{K}(0) \leftarrow \emptyset$; Propagation channel vectors $\{\mathbf{h}_1, \ldots, \mathbf{h}_N \}$; QoS requirements $\{ \xi_1, \ldots, \xi_k \}$; Number of time slots $T$ and individual scheduled time slots $\{T_1, \ldots, T_N\}$; Transmit data powers $\{ p_1, \ldots, p_{N} \}$.
	\begin{itemize}
		\item[1.] Select scheduled user~$\pi_1$ based on the best channel gain as obtained in \eqref{eq:ChannelOrder}.
		\item[2.] Set $t=1$, then update $\mathcal{N}(1)$ and $\mathcal{K}(1)$ by \eqref{eq:1stUpdate}.
		\item[3.] \textbf{while} $t \leq T$ \textbf{do}
		\begin{itemize}
			\item[3.1.] Set $m = |\mathcal{K}(t-1)|$ and  $\mathcal{K}_m(t) = \mathcal{K}(t-1)$.
			\item[3.2.] \textbf{while} $m \leq M$ \textbf{do}
			\begin{itemize}
			\item[3.2.1.] Obtain user~$k_{m}^{t,\ast}$ and $\widetilde{\mathcal{K}}_m^{\ast}(t)$  by  solving problem \eqref{Prob:Ratev1} with $\widetilde{\mathcal{K}}_m(t)$ updated in \eqref{eq:Ktilde}.
			\item[3.2.2.] \textbf{If} the conditions \eqref{eq:Ser1} and \eqref{eq:Ser2} satisfy: Update $\mathcal{N}(t)$ and $\mathcal{K}_m(t)$ as \eqref{eq:Updatev1}. \textbf{Otherwise} keep $\mathcal{N}(t)$ and $\mathcal{K}_m(t)$ unchanged and go to Step $3.2.3.$
			\item[3.2.3.] Set $m=m+1$.
			\end{itemize}
		    \item[3.3.] \textbf{end while}
			\item[3.4.] Update $\mathcal{K}(t)$ by \eqref{eq:Kt} and compute the throughput of scheduled users by \eqref{eq:Capacityk}.
			\item[3.5.] Find the scheduled users satisfied their QoS requirements (set $\widehat{\mathcal{K}}(t)$) by computing the aggregated throughput using \eqref{eq:Rk} and checking the condition \eqref{eq:Conditionv1}, then remove them from service by using \eqref{eq:Ktv1}.
			\item[3.6.] Update $\mathcal{K}(t)$ by \eqref{eq:Ktv1} and set $t=t+1$.
		\end{itemize}
		\item[4.] \textbf{end while}
	\end{itemize}
	\textbf{Output}: The scheduled users in the observed window time and their throughput [Mbps]. 
	\vspace*{-0.0cm}
\end{algorithm}
\vspace*{-0.2cm}
\subsection{Computational Complexity}
\vspace*{-0.1cm}
Let us consider the multiplications, division, square root, and matrix inversion as the dominated arithmetic operations, similar to \cite{van2019power,Bjornson2017bo}, the computational complexity order of  Algorithm~\ref{Algorithm1} is given in Lemma~\ref{eq:Complexity}.
\vspace{-0.2cm}
\begin{lemma} \label{eq:Complexity}
Algorithm~\ref{Algorithm1} has the computational complexity in the order of $\mathcal{O}\left( C_0 + C_1 + C_2 \right)$, where
\begin{align}
C_0 &= NM + N\log_2 N,\\
C_1 &= (M+2) \sum_{t=1}^T \sum_{m= |\mathcal{K}(t-1)|}^M  |\mathcal{N}(t)| |\widetilde{\mathcal{K}}_m(t)|,\\
C_2 & = \frac{M^2}{2}\sum_{t=1}^T \sum_{m=|\mathcal{K}(t-1)|}^M |\mathcal{N}(t)|   |\widetilde{\mathcal{K}}_m(t)|^2.
\end{align}
\end{lemma}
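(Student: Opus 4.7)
The plan is to decompose Algorithm~\ref{Algorithm1} into three distinct computational stages and to tally the arithmetic operations in each, showing that the dominant costs are captured by $C_0$, $C_1$, and $C_2$. The three stages are: (i) the one-shot initialization of Steps~1 and~2, which evaluates the channel norms in \eqref{eq:ChannelOrder} and sorts them; (ii) the inner-loop evaluation of the sum throughput $\sum_{k' \in \widetilde{\mathcal{K}}_m(t)} R_{k'}(\widetilde{\mathcal{K}}_m(t))$ required by problem~\eqref{Prob:Ratev1}; and (iii) the RZF precoding matrix construction triggered every time the trial set $\widetilde{\mathcal{K}}_m(t)$ is modified. Stages (ii) and (iii) are repeated for every candidate user $k \in \mathcal{N}(t)$, every inner index $m \in \{|\mathcal{K}(t-1)|,\ldots,M\}$, and every outer index $t \in \{1,\ldots,T\}$, so their contributions unfold as the nested summations already anticipated in $C_1$ and $C_2$.

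For the initialization, I would first count $M$ complex multiplications for each of the $N$ norms $\mathbf{h}_k^H \mathbf{h}_k$, giving $NM$, and then invoke a standard comparison sort to order the $N$ scalars at cost $N\log_2 N$. Summing yields $C_0$. This part is mechanical and does not interact with the two inner sums.

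For stage (ii), given a precoding matrix in hand from stage (iii), each $R_{k'}(\widetilde{\mathcal{K}}_m(t))$ in Lemma~\ref{lemma:ChannelCapacity} requires one inner product $\mathbf{h}_{k'}^H \mathbf{w}_{k'}(t)$ of cost $M$, a squared modulus, a power scaling, a division forming the SINR, and a base-$2$ logarithm; amortised, this yields the $(M+2)$ factor per scheduled user which, multiplied by the $|\widetilde{\mathcal{K}}_m(t)|$ users in the trial set and by the $|\mathcal{N}(t)|$ candidates, produces exactly $C_1$. For stage (iii), constructing $\mathbf{W}(t)$ anew for every candidate invokes Lemma~\ref{lemma:ComplexityWtv2}; after collecting the leading terms from its proof (the $|\widetilde{\mathcal{K}}_m(t)|^2 M$ contribution from the Cholesky step and the $M^2|\widetilde{\mathcal{K}}_m(t)|/2$ contribution from $\gamma(t)$, both bounded by $\tfrac{M^2}{2}|\widetilde{\mathcal{K}}_m(t)|^2$ since $|\widetilde{\mathcal{K}}_m(t)| \le M$) and multiplying by $|\mathcal{N}(t)|$ candidates produces $C_2$. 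The residual bookkeeping in Steps~3.1, 3.2.2, 3.2.3, 3.5, and 3.6---set updates, the QoS tests \eqref{eq:Ser1}--\eqref{eq:Ser2} and \eqref{eq:Conditionv1}, and the accumulation in \eqref{eq:Rk}---is only linear in $|\widetilde{\mathcal{K}}_m(t)|$ and is therefore absorbed into the terms already counted.

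The main obstacle will be the bookkeeping between the ``per scheduled user'' and ``per trial set'' factors so that the nested sums really take the announced shape. In particular, I must be careful that each candidate rate evaluation reuses the precoding columns produced in stage (iii) rather than recomputing them, avoiding a double count of the inner-product operations, and that when Lemma~\ref{lemma:ComplexityWtv2} is invoked I retain precisely those subleading terms that remain dominant once $|\widetilde{\mathcal{K}}_m(t)|$ grows up to $M$. Once these choices are justified, adding the three contributions and applying the big-$\mathcal{O}$ bound completes the proof.
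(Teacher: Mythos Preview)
Your proposal is correct and follows essentially the same three-part decomposition as the paper's own proof: initialization cost for the norms and sorting ($C_0$), throughput evaluations over all candidates in the inner loop ($C_1$), and RZF precoding construction via Lemma~\ref{lemma:ComplexityWtv2} for each candidate ($C_2$), with lower-order bookkeeping absorbed. The only minor deviation is that for $C_2$ you reach back into the proof of Lemma~\ref{lemma:ComplexityWtv2} to extract the $|\widetilde{\mathcal{K}}_m(t)|^2 M$ and $\tfrac{1}{2}M^2|\widetilde{\mathcal{K}}_m(t)|$ terms and bound them jointly, whereas the paper simply cites the lemma's headline result; your treatment is slightly more explicit but not a different argument.
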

\begin{proof}
Selecting the first scheduled user based on the channel gains requires the $N(M+1)$ arithmetic operations to compute the $N$ channel gains and $\mathcal{O}(N\log_2 N)$ for sorting them in a descending order as in \eqref{eq:ChannelOrder}. Therefore, the computational complexity of this step is proportional to $N( M+ 1 + \log_2 N )$. For each inner loop, we first need to compute the instantaneous throughput in \eqref{eq:Capacityk}, which requires the $(M+2)|\widetilde{\mathcal{K}}_m(t)| +3$ arithmetic operations. The computational complexity needed to solve \eqref{Prob:Ratev1} scales up with the factor $ |\mathcal{N}(t)|(M+2) |\widetilde{\mathcal{K}}_m(t)|+ 3 |\mathcal{N}(t)| $, thus the inner loop has the computational complexity in the order of $ |\mathcal{N}(t)|(M+2) \sum_{m= |\mathcal{K}(t-1)|}^M  |\widetilde{\mathcal{K}}_m(t)|$. Furthermore, each RZF precoding matrix with the cost as in Lemma~\ref{lemma:ComplexityWtv2} leads to the total computational complexity per inner loop in the order of $ \frac{1}{2}|\mathcal{N}(t)|M^2 \sum_{m=|\mathcal{K}(t-1)|}^M  |\widetilde{\mathcal{K}}_m(t)|^2$. By summing up all the cost and removing the terms with low degree, the result is obtained as in the lemma.
\vspace{-0.2cm}
\end{proof}
Lemma~\ref{eq:Complexity} manifests that Algorithm~\ref{Algorithm1} has the computational complexity per iteration in a quadratic order of the scheduled users and satellite beams, whereby the entire computational complexity is much lower than an exhaustive search. This algorithm can thus perform the user scheduling for a large-scale network with many users.
\begin{figure*}[t]
	\begin{minipage}{0.33\textwidth}
		\centering
		\includegraphics[trim=3.4cm 10.2cm 4.0cm 10.6cm, clip=true, width=2.35in]{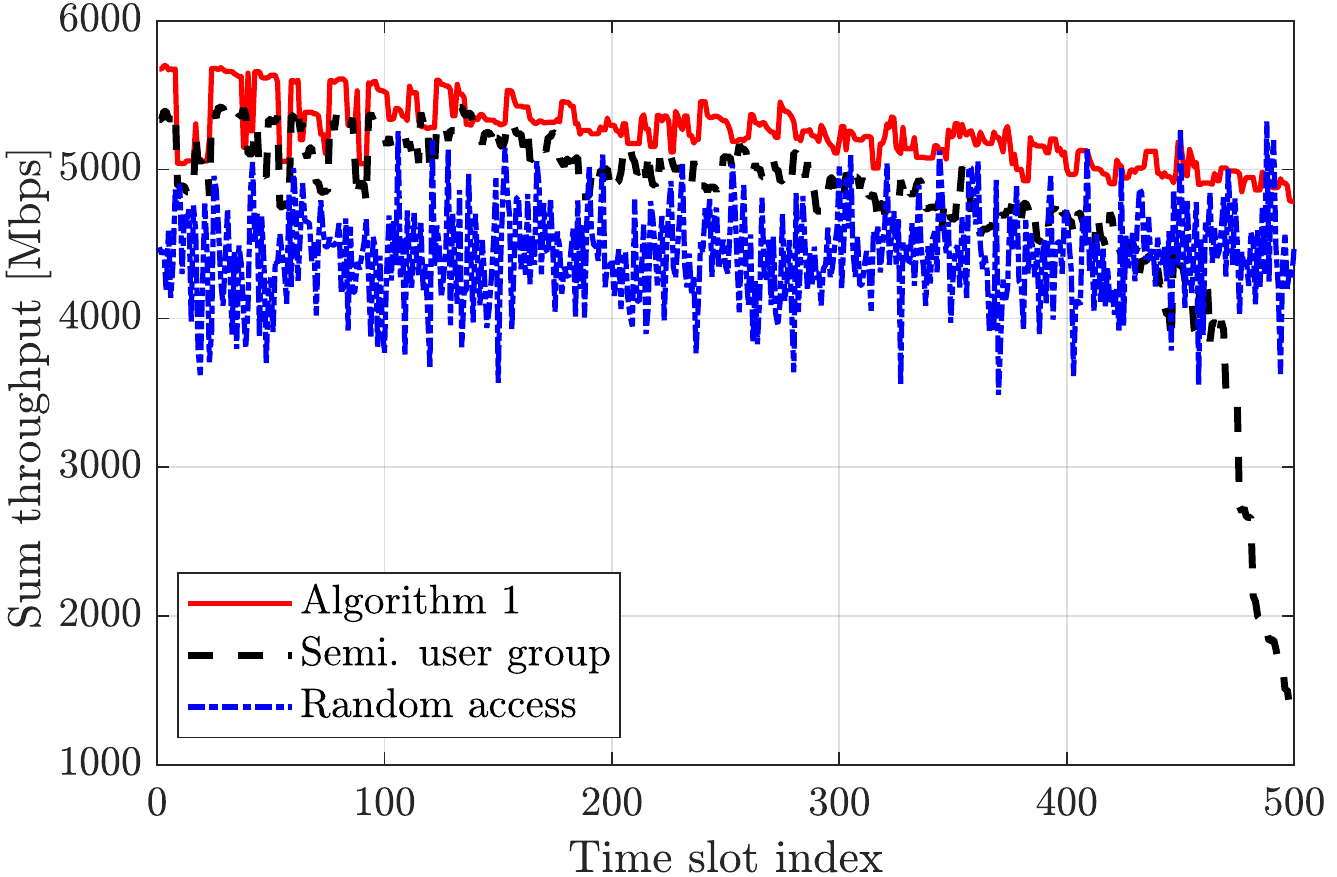} \vspace*{-0.15cm}\\
		\fontsize{8}{8}{(a)}
		\vspace*{-0.2cm}
	\end{minipage}
	\begin{minipage}{0.33\textwidth}
		\centering
		\includegraphics[trim=3.4cm 10.2cm 4.0cm 10.6cm, clip=true, width=2.35in]{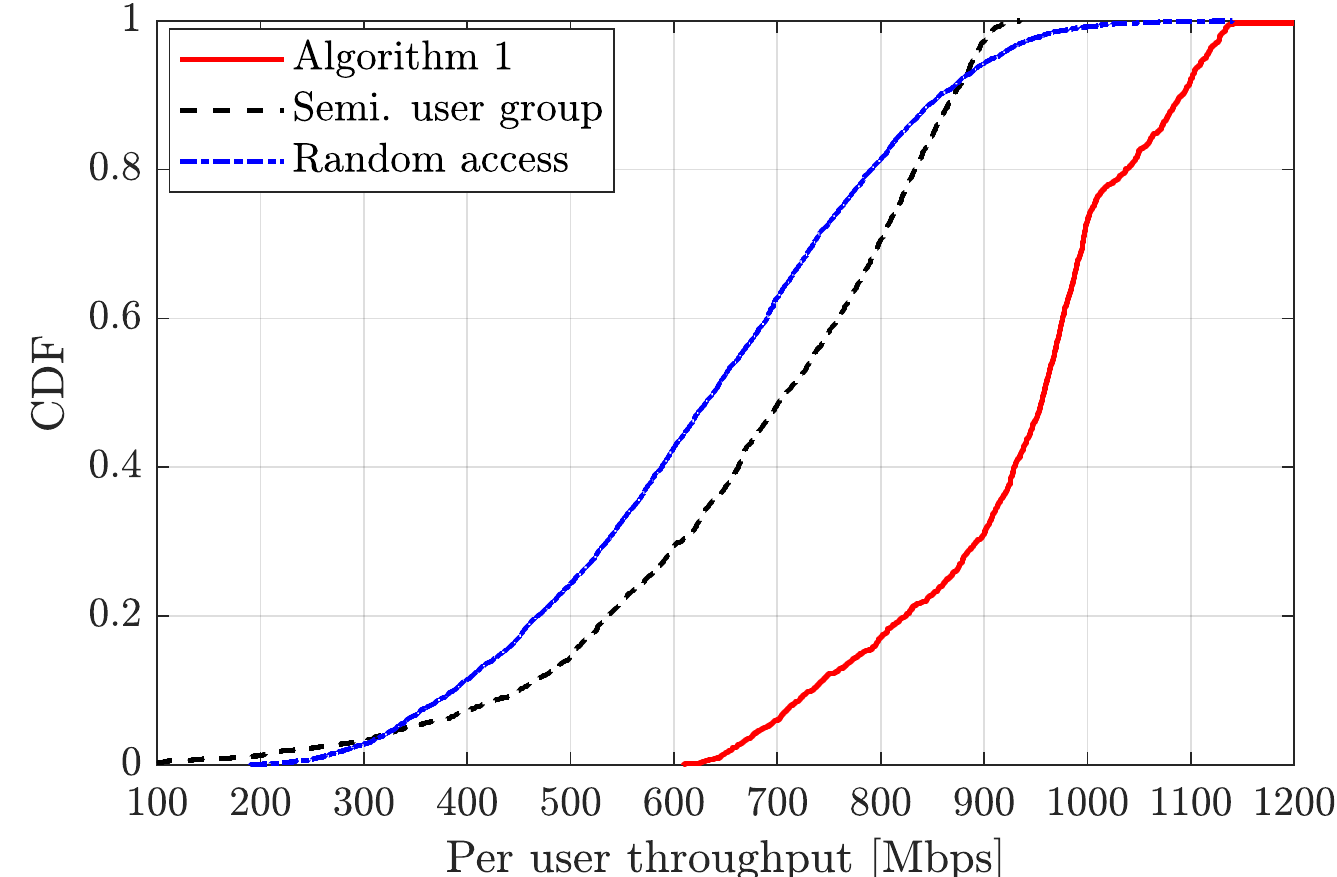} \vspace*{-0.15cm}\\
		\fontsize{8}{8}{(b)}
		\vspace*{-0.2cm}
	\end{minipage}
	\begin{minipage}{0.33\textwidth}
		\centering
		\includegraphics[trim=3.4cm 10.2cm 4.0cm 10.6cm, clip=true, width=2.35in]{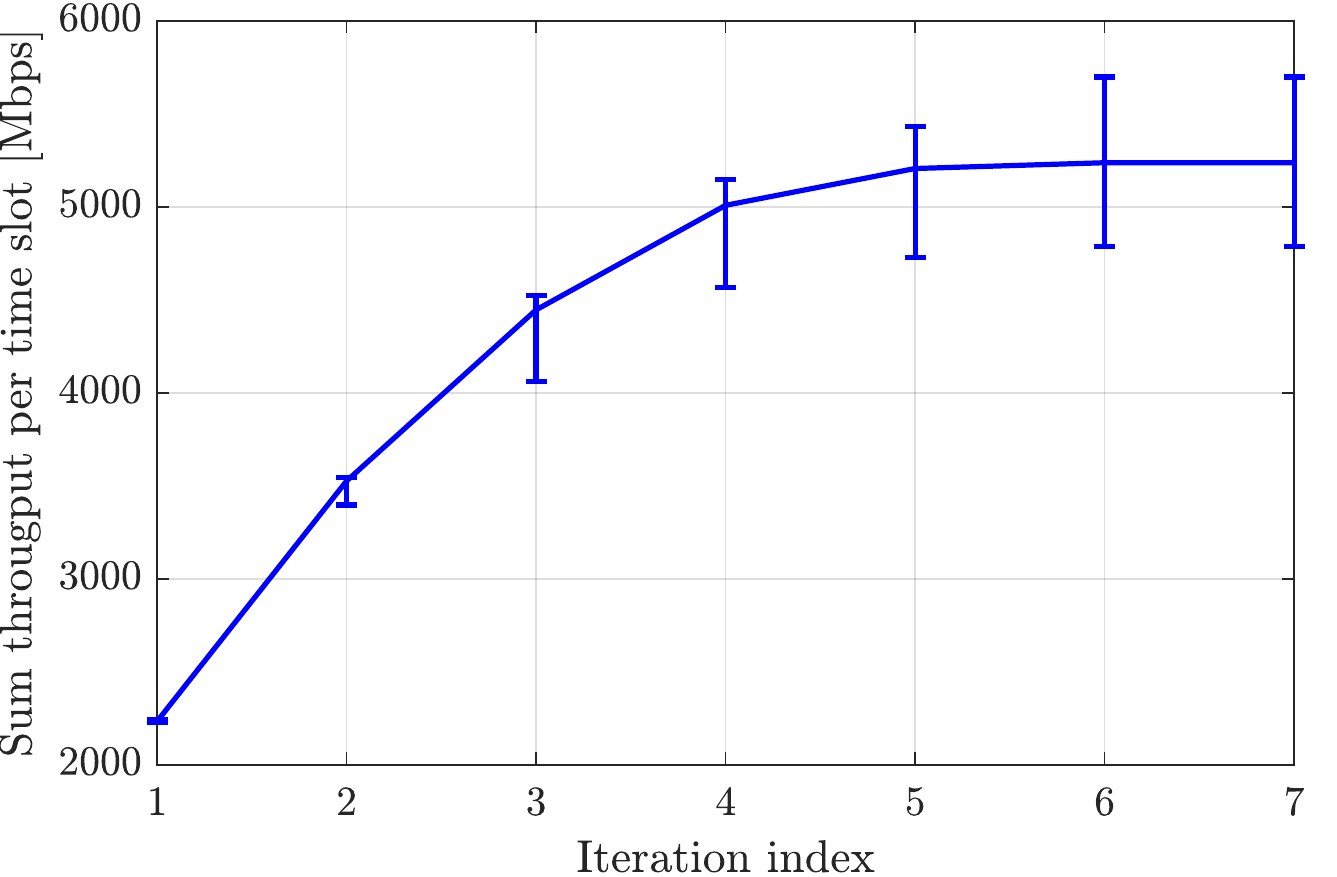} \vspace*{-0.15cm}\\
		\fontsize{8}{8}{(c)}
		\vspace*{-0.2cm}
	\end{minipage}
	\caption{The system performance: (a) The sum throughput [Mbps] versus the time slots; (b) The throughput [Mbps] per user; (c) The convergence property of the sum throughput [Mbps] per time slot.} \label{Fig2}
	\vspace*{-0.4cm}
\end{figure*}
\vspace*{-0.2cm}
\section{Numerical Results}
\vspace*{-0.1cm}
We consider a GEO satellite system $3500$ users clustered into $7$ beams. The observed window time includes $500$ time slots. A sum power-constrained system is considered with the per-beam power of $10$~dBW. The system bandwidth is $500$ MHz and the carrier frequency is $19.95$~GHz. The QoS requirement per time slot is $500$~Mbps with the total time slots per user $T_k = 1, \forall k,$ for simplicity. In order to demonstrate the efficiency of the proposed optimization framework, the following benchmarks are included for comparison:
\vspace{-0.1cm}
\begin{itemize}
\item[$i)$] \textit{Proposed heuristic algorithm} is presented in Algorithm~\ref{Algorithm1} via working on the user scheduling to maximize the sum throughput with the QoS requirements. 
\item[$ii)$] \textit{Semiorthogonal user group} was proposed in \cite{Yoo2006a} by exploiting the orthogonality between the propagation channels. The number of scheduled users and satellite beams are assumed to be equal. Additionally, the user scheduling does not include the QoS requirements into account.
\item[$iii)$] \textit{Random access} is a low computational complexity benchmark and served as the baseline in previous works \cite{8371220}. Along with time slots, the number of scheduled users are randomly selected and equal to the number of satellite beams. There is no guarantee on the QoS requirements.
\vspace{-0.00cm}
\end{itemize}
Figure~\ref{Fig2}(a) plots the sum throughput [Mbps] as a function of time slots. Random access provides the worst throughput in most of the time slots that is only $4419$ [Mbps] on average. However, it offers good performance in the last time slots. Semiorthogonal user group performs $7.6\%$ sum throughput better than random access with $4753$ [Mbps] on average and becomes the worst in the last time slots where the available users have strongly correlated channels. Algorithm~\ref{Algorithm1} gives the best performance with $18.6\%$ better than the baseline. 

Figure~\ref{Fig2}(b) shows the cumulative density function (CDF) of the scheduled users. Random access averagely provides the throughput of about $631$ [Mbps] per user, while semiorthogonal user group offers $679$ [Mbps]. Notably, Algorithm~\ref{Algorithm1} gives the highest per-user throughput with $1.37\times$ higher than semiorthogonal user group. Algorithm~\ref{Algorithm1} ensures all the scheduled users with their QoS requirements. In contrast,  $24.4\%$ and $14.5\%$ user locations cannot be served with the requested QoS if the system deploys random access and semiorthogonal user group, respectively, due to no QoS guarantee in those benchmarks. It manifests the practical importance of Algorithm~\ref{Algorithm1}.

Figure~\ref{Fig2}(c) plots the convergence of Algorithm~\ref{Algorithm1} by utilizing the median rate among the $500$ time slots. Significant growth of the sum throughput is observed in the first iterations, then reaching the fixed point when the iteration index equals the number of satellite beams. The sum throughput at the last iteration improves $2.3\times$ compared to the first one. Furthermore, the error bars show the fluctuation at each time slot compared to the median value. From a small fluctuation at the beginning, it gets larger in the last iterations. Consequently, a good scheduling plays a critical role in improving  the sum throughput while maintaining the QoSs. 
\vspace*{-0.1cm}
\section{Conclusion}
\vspace*{-0.1cm}
This paper proposes a heuristic user scheduling strategy for large-scale MB-HTS systems where many users simultaneously request to access the network. We formulated a total throughput optimization maximization problem in an observed window time subject to the individual QoS requirements. Due to the inherent non-convexity, we proposed a heuristic algorithm to obtain a local solution with low computational complexity. Numerical results demonstrated all scheduled users having the better QoSs than requested. Besides, the proposed algorithm offers  better sum throughput [Mbps] per time slot than the other benchmarks with up to $18.6\%$.
\vspace*{-0.2cm}
\bibliographystyle{IEEEtran} 
\bibliography{IEEEabrv,refs}
\end{document}